\DeclareMathOperator\erf{erf}
\DeclareMathOperator{\poly}{poly}
\DeclareMathOperator{\polylog}{polylog}
\DeclareMathOperator{\Real}{Re}
\begin{document}
\title{Nearly-frustration-free ground state preparation}
\author{Matthew Thibodeau}
\email{mt24@illinois.edu}
\author{Bryan K. Clark}
\email{bkclark@illinois.edu}
\affiliation{Department of Physics, University of Illinois at Urbana-Champaign, IL 61801, USA}
\affiliation{IQUIST and Institute for Condensed Matter Theory and NCSA Center for Artificial Intelligence Innovation, University of Illinois at Urbana-Champaign, IL 61801, USA}

\maketitle

\newtheorem{defn}{Definition}
\newtheorem{thm}{Theorem}
\newtheorem{lem}{Lemma}
\newtheorem{prop}{Proposition}
\newtheorem{cor}{Corollary}

\begin{abstract}
    Solving for quantum ground states is important for understanding the properties of quantum many-body systems, and quantum computers are potentially well-suited for solving for quantum ground states.  Recent work \cite{Lin2020} has presented a nearly optimal scheme that prepares ground states on a quantum computer for completely generic Hamiltonians, whose query complexity scales as $\delta^{-1}$, i.e. inversely with their normalized gap.  Here we consider instead the ground state preparation problem restricted to a special subset of Hamiltonians, which includes those which we term ``nearly-frustration-free'': the class of Hamiltonians for which the ground state energy of their block-encoded and hence normalized Hamiltonian $\alpha^{-1}H$ is within $\delta^y$ of -1, where $\delta$ is the spectral gap of $\alpha^{-1}H$ and $0 \leq y \leq 1$. For this subclass, we describe an algorithm whose dependence on the gap is asymptotically better, scaling as $\delta^{y/2-1}$, and show that this new dependence is optimal up to factors of $\log \delta$.  In addition, we give examples of physically motivated Hamiltonians which live in this subclass. Finally, we describe an extension of this method which allows the preparation of excited states both for generic Hamiltonians as well as, at a similar speedup as the ground state case, for those which are nearly frustration-free.
\end{abstract}

\section{Introduction}

Steady progress in the development of quantum computing hardware has spurred interest in possible applications of these machines. One natural candidate is the ground state preparation problem: given a many-body Hamiltonian $H$, prepare its ground state in a quantum register. Given a quantum many-body system, physicists are often interested in the properties of its ground state because this state predominates at low temperature; preparing ground states using a quantum computer is an important first step to studying them synthetically, with full control over their generating Hamiltonian.

To prepare the ground state of $H$ in a quantum register, one must distinguish it from all other states in the spectrum. Since, by definition, the ground state is separated in energy from the rest of the spectrum by the spectral gap $\Delta$, it should be no great surprise that $\Delta$ is a key parameter controlling the complexity of quantum algorithms that prepare ground states. For example, both phase estimation \cite{Kitaev1995} combined with amplitude amplification \cite{Brassard2002}, as well as quantum singular value tranformation (QSVT) \cite{Gilyen2019}, can find the ground state in time $O(1/\Delta)$.  The latter is more efficient with respect to the error $\epsilon$, with a cost which goes as $\log(1/\epsilon)$ instead of $1/\epsilon$ \cite{Ge2019}.  In fact, the QSVT is known to be optimal with respect to query complexity for preparing the ground state in a quantum register for generic Hamiltonians \cite{Lin2020}. The query complexity  measures the number of times that $H$ must be accessed and roughly corresponds to the circuit depth. The optimality of this procedure comes from a reduction to Grover's algorithm \cite{Grover96}.

For QSVT, beyond $\Delta$ and $\epsilon$, there is an additional dependence on the overlap $\gamma = |\braket{\phi|\mu}|$ with the starting ansatz, and an additional term $\alpha$ related to the cost of loading (i.e. block-encoding) $H$ into a quantum computer; roughly, for generic local lattice models, a block-encoding with $\alpha \propto N_\textrm{sites}$ is easily obtained  \cite{Low2019}.
In fact, both in our algorithms and those previously given, $\Delta$ always appears in the complexity normalized to $\Delta/\alpha\sim \Delta/N$;
Section 3 covers the details of this normalization.

In many situations, one might know some special properties that $H$ may satisfy which then may evade this general optimality result.  Examples of potential properties could include $k$-locality, frustration-free ground states, or stoquastic Hamiltonians \cite{Bravyi2008, Somma2013}.   It is useful, then, to find quantum circuits that prepare the ground state of $H$ after assuming that $H$ is a member of some restricted subclass of all Hermitian operators, or to bound the minimum depth of such circuits.

One important subclass is that of frustration-free Hamiltonians, where the ground state of $H$ is simultaneously a ground state of all the operators which sum up to make $H$. A naive application of the algorithm from ref. \cite{Lin2020} still scales as $1/\Delta$ even in the frustration free case.  Nonetheless, there is a straightforward way to do better for many frustration free Hamiltonians using spectral gap amplification \cite{Somma2013} or uniform spectral amplification \cite{Low2017}. These two approaches take Hamiltonians which are frustration free when decomposed into operator terms which are projectors or linear combination of unitaries respectively and generates a new Hamiltonian with the same ground state which has an $O(1)$ gap.  Given a standard (block-encoded) description of our original Hamiltonian, one query of the new Hamiltonian takes $O(1/\sqrt{\Delta})$ queries of the original Hamiltonian.  Using the optimal ground-state generation on the new Hamiltonian then takes $O(1)$ time (given that the gap is $O(1)$) leading to a total number of queries (and circuit depth) which scales as $O(1/\sqrt{\Delta})$, significantly improving over the generic result of $1/\Delta$.  For the case of a local Hamiltonian, this also gives a speedup from $O(N)$ to $O(\sqrt{N})$ via the $\alpha$-dependence of the complexity. The apparent violation of the optimality relation is not surprising, as the optimality theorem of \cite{Lin2020} relies on a specific Hamiltonian $H_G$ that is extremely frustrated.

The improvement of the frustration free case from $1/\Delta$ to $1/\sqrt{\Delta}$ motivates asking whether there is some intermediate class of $H$ between fully frustration-free and the generic case where a speedup might be achieved, as well as a corresponding proof of optimality. 
Further, while this all applies to ground state problems, there are analogous questions one can ask about excited states. 

\section{Summary of Results}

In this work, we address the question of finding a method that prepares a ground state of $H$ for a class of Hamiltonians that are near frustration-free, giving a scaling which continuously tunes between the $1/\Delta$ of the generic case to the $1/\sqrt{\Delta}$ for the frustration-free case. 
While our primary goal will be to show this for ground state problems, we will also extend the work of ref. \cite{Lin2020} to preparing low-lying excited states and furthermore show how they can similarly be accelerated in nearly frustration-free situations. 
To accomplish these goals, the property of nearly frustration-free Hamiltonians that we use is that the ground state energy of the Hamiltonian is close to 
the frustration-free energy.
We then provide a lower-bound on the complexity of preparing ground states of such Hamiltonians,
showing that the scaling with $\Delta$ of our algorithm is optimal up to logarithmic factors.

To be more precise, we need to use the language of \textit{block encoding} \cite{Chakraborty2019, Low2019}.
A block encoding of $H$ is a unitary operator $U$ that has a scaled copy of the Hamiltonian $\alpha^{-1}H$ as its upper-left corner.  
The existence of a block-encoding $U$ of $H$ is a very powerful tool because it allows the application of quantum signal processing \cite{Low2019, Gilyen2019, Haah2019, Low2017PRL, Dong2021}, which lets us transform $H$ by a suitable polynomial $p(x)$, acting on the spectrum of $H$ by $p(H) = \sum_\lambda p(\lambda) \ket{\lambda}\bra{\lambda}$.

Now let us define a parameter $y$ in terms of the other parameters of our problem:
\begin{defn}
Let $H_0$ be a Hermitian operator on $n$ qubits with ground state energy $\mu_0$ and gap $\Delta$, and let $U$ be a block-encoding of $\alpha^{-1}H_0$ (with $\alpha \geq ||H_0||$, the spectral norm of $H_0$) using $m$ extra qubits.  Set $H = \alpha^{-1}H_0$, and let $\delta = \alpha^{-1}\Delta$ and $\mu = \alpha^{-1} \mu_0$; note that $-1 \leq \mu \leq 1$. Given these parameters, there is a $y$ such that $0 \leq y \leq 1$ and $\mu \leq -1 + 2\delta^y$.
\end{defn}
The relationship between $\alpha$, spectral norm $||H_0||$, and the ground state energy $\mu$ will play an important role in the efficiency of our algorithm. In particular, the actual promise we require of $H$ is that there exists a block-encoding where $\alpha$ is close to the magnitude of the ground state energy $|\mu|$. 
This closeness is encoded in the value of the parameter $y$  which specifies the distance between $\mu$ and its minimum possible value of $-1$, and we develop an algorithm for the ground state preparation problem such that the query complexity to $U$ is controlled by $y$. 

\begin{prop}
Let $H_0$ and $U$ be as specified in Definition 1, and let $\ket{\phi}$ satisfy $|\braket{\phi|\mu_0}| = \gamma$. Then the query complexity of preparing the ground state of $H_0$ to precision $\epsilon$ starting with the ansatz $\ket{\phi}$, as measured by queries to $U$ and $U^\dagger$, is 
\begin{align}O\left(\frac{\delta^{y/2 - 1}}{\gamma}\log^{5/2} \frac{1}{\epsilon}\log\frac{1}{\delta}\right)
\end{align}
\end{prop}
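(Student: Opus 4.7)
The plan is to adapt the QSVT-based ground-state filter of Lin and Tong \cite{Lin2020} by replacing the generic sign-function polynomial, which has degree $\Theta(1/\delta)$, with one that exploits the fact that the step we wish to resolve lies near the boundary $x=-1$ of the spectrum of $H$. The starting observation is that any polynomial of degree $d$ in $x$ becomes a trigonometric polynomial of degree $d$ in $\theta$ under $x=\cos\theta$, and that the Chebyshev node density near $x=\pm 1$ is quadratically higher than in the interior, so features near the boundary are cheaper to resolve.

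First I would construct a polynomial $p$ with $p(\mu)\approx 1$, $|p(\lambda)|\le\epsilon$ for $\lambda\in[\mu+\delta,1]$, and $|p(x)|\le 1$ throughout $[-1,1]$. Choosing a threshold $c=\mu+\delta/2$ and setting $\theta_c=\arccos(c)$, the substitution gives $\sin\theta_c=\sqrt{1-c^2}=\Theta(\delta^{y/2})$, since $c\le -1+2\delta^y+\delta/2$. Hence a width-$\delta$ transition in $x$ becomes a width-$\delta/\sin\theta_c=\Theta(\delta^{1-y/2})$ transition in $\theta$. A smoothed-step trigonometric polynomial of that width can be built from, e.g., an error-function template, yielding degree $O(\delta^{y/2-1}\log(1/\epsilon))$ in $\theta$, which is the same degree in $x$. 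This is the main source of the advertised scaling before amplification and log factors.

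Next I would apply QSVT to realise $p(H)$ using $O(\deg p)$ queries to $U$ and $U^\dagger$. Applied to $\ket{\phi}$, this produces (up to error) a state with ground-state overlap $\gamma$ in the appropriate block. Fixed-point amplitude amplification then prepares $\ket{\mu_0}$ to precision $\epsilon$ with $O(\gamma^{-1}\log(1/\epsilon))$ additional rounds. Because $\mu$ is not known a priori, I would wrap the filter inside a binary search over the threshold $c$ in the style of Lin--Tong, costing an extra $O(\log(1/\delta))$ outer iterations; the total query count is dominated by the finest-resolution stage. Compounding the log factors from polynomial construction, amplification, and failure-probability suppression in the search produces the $\log^{5/2}(1/\epsilon)$ exponent, while the search depth contributes the $\log(1/\delta)$ factor.

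The main obstacle I expect is proving the uniform bound $\|p\|_{[-1,1]}\le 1$ for the constructed polynomial, which QSVT requires globally rather than just near $x=\mu$; one must verify that the sharp behaviour near the boundary does not force overshoot elsewhere. A standard route is to convolve a cut-off indicator with a Gaussian in $\theta$-space, whose Fourier tail controls the interior, and then transport back to $x$-space. A second delicate point is the bootstrap from an unknown $\mu$: the binary search must maintain an upper bound on $\mu+1$ of order $\delta^y$ throughout, so that the boundary-proximity argument applies at every outer iteration rather than reverting to the worst-case $\Theta(1/\delta)$ degree at intermediate stages.
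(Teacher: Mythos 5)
Your proposal captures the central mechanism of the paper's proof: exploit the fact that the ground state energy sits near the edge $x=-1$ of the block-encoded spectrum so that the filter polynomial can be taken of lower degree. Your $x=\cos\theta$ heuristic (Chebyshev node density scales like $1/\sin\theta$) is the intuitive version of what the paper makes rigorous in Lemma~1 via the Bernstein-ellipse bound $\max_{x\in[-1,1]}|f-p_n|\le 2M\rho^{-n}/(\rho-1)$. The ellipse calculation yields the precise degree $n = O\bigl(\tfrac{\sqrt{1-|\eta|}}{\delta}\log\tfrac{\sqrt{1-|\eta|}}{\delta}\log^{3/2}\tfrac{1}{\epsilon}\bigr)$, and with $1-|\eta|=\Theta(\delta^y)$ this recovers $\widetilde O(\delta^{y/2-1})$. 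Two points in your write-up diverge from the paper and should be reconciled. First, the $\log(1/\delta)$ factor in the final bound does not come from an outer binary search over $\mu$; it appears already in Lemma~1's degree formula (from the $a^{-1}$ overshoot in $\epsilon = O(a^{-1}e^{-na})$). The paper places the step at $\eta=\mu+\delta/2$ treating $\mu$ as known, so no search loop is needed; your stated degree $O(\delta^{y/2-1}\log(1/\epsilon))$ is missing both this $\log(1/\delta)$ and a $\log^{1/2}(1/\epsilon)$ that only fall out of the more careful analysis. Second, the paper's filter must have definite parity for QSP: it forms $p_{\mathrm{odd}}(x)=\tfrac12(p(x)-p(-x))$, block-encodes $p_{\mathrm{odd}}(H)$ to open an $\Omega(1)$ effective gap, and only then composes with the Lin--Tong projector at cost $O(\gamma^{-1}\log(1/\epsilon))$; multiplying these gives $\log^{3/2}\cdot\log=\log^{5/2}$ cleanly. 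Your proposal would land at the same complexity, but the route through an unnecessary search loop obscures where each log factor originates and risks double-counting; and the uniform bound $\|p\|_\infty\le 1$, which you rightly flag as a concern, is exactly what the explicit Chebyshev truncation plus Bernstein-ellipse estimate handles.
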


To accomplish this, as in \cite{Lin2020}, we will also apply a polynomial of $H$, $p(H)\approx \ket{\mu}\bra{\mu}$, which is approximately the projection onto the ground state $\ket{\mu}$.  Note that the polynomial we apply and that of ref \cite{Lin2020}, though closely related, differs in an important way which is necessary to achieve this improved algorithm. 
In particular, $p(x)$ is necessarily a quickly-oscillating function of $x$ near the ground state energy so as to filter out the first excited state. 
The key improvement for our polynomial is to shift the location of this oscillation from $x = 0$ to outer edges of the domain $[-1,1]$ which allows a reduction in the required degree of $p$ and therefore a reduction in the query complexity. 

As an example, in the case that $H_0$ is frustration-free when decomposed into a linear combination of unitaries, there exists a block-encoding $U$ such that $\alpha = |\mu_0|$, so that $\mu = -1$ and $y = 1$. 
Then, our complexity results agree with that achieved by combining uniform spectral amplification \cite{Low2017} with the standard approach of ref. \cite{Lin2020}, giving that the ground state can be prepared with complexity $O(1/\sqrt{\delta})$.  In addition, when $\mu$ is close enough to $-1$ and $y > 0$, we find an intermediate scaling between $1/\sqrt{\delta}$ and $1/\delta$.

In addition to providing an improved algorithm, we also show that the asymptotic dependence of the query complexity on $\delta$ is optimal, up to a factor of $\log \delta$, for any Hamiltonians which satisfy only the properties of Definition 1.

\begin{prop}
Let $H_0$ be as specified in Definition 1. If $\gamma = \Omega(1)$ and $\delta \to 0+$, then the query complexity of preparing its ground state is $\Omega(\delta^{y/2 - 1})$.
\end{prop}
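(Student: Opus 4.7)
The plan is to combine the standard polynomial representation of $T$-query block-encoding circuits with a classical polynomial-approximation lower bound derived from Bernstein's inequality, exploiting that the ground state energy $\mu = -1 + 2\delta^y$ lies within $\Theta(\delta^y)$ of the boundary $-1$ of the normalized spectrum $[-1,1]$.

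First, I would reduce to a minimal adversarial instance: take $H_0$ to have exactly two eigenvalues, namely $\mu$ on its ground vector $|\mu_0\rangle$ and $\mu+\delta$ on its orthogonal complement, and let the ansatz be $|\phi\rangle = \gamma|\mu_0\rangle + \sqrt{1-\gamma^2}|\mu_0^\perp\rangle$ with $\gamma = \Theta(1)$. For $y \in (0,1]$ I take $\mu = -1+2\delta^y$, saturating the Definition 1 bound, and for $y=0$ I take $\mu$ bounded away from $\pm 1$, where the Definition 1 constraint is vacuous; in both cases $\sqrt{1-\mu^2} = \Theta(\delta^{y/2})$. Since $H_0$ lies in the class covered by Definition 1 for the specified parameters, any algorithm valid for the class must also succeed on $H_0$ using $T$ queries to its block-encoding $U$.

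Second, I invoke the polynomial representation of block-encoding circuits that underlies quantum signal processing together with the polynomial method of Beals--Buhrman--Cleve--Mosca--Wolf adapted to this setting: any $T$-query circuit composed from alternations of $U$, $U^\dagger$, and $U$-independent operations, post-selected on the ancilla register, realizes a polynomial transformation $p(H_0)$ with $\deg p = O(T)$ and $\|p\|_{[-1,1]} \leq 1$, and any general $T$-query algorithm may be reduced to this form up to constant overhead. Imposing output fidelity at least $1-\epsilon$ with $|\mu_0\rangle$ and expanding $|\phi\rangle$ in the eigenbasis of $H_0$ yields $|\gamma\, p(\mu)| = 1 - O(\epsilon)$ and $|\sqrt{1-\gamma^2}\, p(\mu+\delta)| = O(\epsilon)$, which for fixed small $\epsilon$ and $\gamma = \Omega(1)$ forces $|p(\mu)| - |p(\mu+\delta)| = \Omega(1)$.

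Third, I apply Bernstein's inequality $|p'(x)| \leq d\,\|p\|_{[-1,1]}/\sqrt{1-x^2}$ on $(-1,1)$. With $\sqrt{1-x^2} = \Theta(\delta^{y/2})$ uniformly on $[\mu,\mu+\delta]$ (using $\delta \leq \delta^y$ for $y\leq 1$ and $\delta\in(0,1)$), integration gives
\begin{equation*}
|p(\mu) - p(\mu+\delta)| \leq \delta \cdot O(d\,\delta^{-y/2}) = O(d\,\delta^{1-y/2}).
\end{equation*}
Combining with the $\Omega(1)$ lower bound on the left-hand side from the previous step yields $d = \Omega(\delta^{y/2-1})$, and hence $T = \Omega(\delta^{y/2-1})$, as $\delta \to 0^+$.

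The main obstacle will be making the second step fully rigorous: an arbitrary $T$-query adaptive algorithm need not factor through QSVT a priori, so one must reduce a general circuit to a polynomial application of $H_0$ without inflating the degree (either via the QSVT-based unitary-synthesis reduction or by applying the polynomial-method machinery directly to the matrix entries of $U$), and one must carefully translate the state-preparation fidelity criterion into the polynomial separation $|p(\mu)| - |p(\mu+\delta)| = \Omega(1)$ without losing control over the $\gamma$ and $\epsilon$ constants. A secondary technical point is verifying uniformly over $y \in [0,1]$ that the Bernstein denominator $\sqrt{1-x^2}$ behaves as $\Theta(\delta^{y/2})$ throughout $[\mu,\mu+\delta]$; this reduces to elementary inequalities.
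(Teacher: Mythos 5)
Your approach is genuinely different from the paper's. The paper reduces unstructured search to ground-state preparation: it takes the Grover Hamiltonian $H_G$, shifts it, and applies a fixed degree-5 polynomial $g$ to produce a $(-1+\delta'^y,\delta')$-problem whose solution would reveal the marked item; since Grover requires $\Omega(1/\delta)$ queries and $1/\delta = \delta'^{y/2-1}$, the bound follows. You instead want to combine a polynomial-method argument (à la BBCMW) with Bernstein's inequality near the endpoint of $[-1,1]$. The exponent bookkeeping in your Step 3 is correct — Bernstein at $x \approx -1+\Theta(\delta^y)$ costs a factor $\delta^{-y/2}$ and integrating over a width-$\delta$ window recovers $d = \Omega(\delta^{y/2-1})$ — so the arithmetic matches the paper's.

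However, there is a genuine gap in Step 2, and it is not merely ``an arbitrary circuit need not factor through QSVT.'' First, the claim that any $T$-query algorithm queried against a block-encoding $U$ realizes a bounded scalar polynomial transformation $p(H_0)$ of degree $O(T)$ is, to my knowledge, not a theorem: a general circuit may exploit the junk blocks of $U$ and its matrix entries in ways that do not correspond to a bounded univariate polynomial in the eigenvalues of $H = \alpha^{-1}H_0$, and even for qubitized single-ancilla circuits the output amplitudes are polynomials in $\lambda$ \emph{and} $\sqrt{1-\lambda^2}$, so the Bernstein bound does not apply verbatim (the term $\sqrt{1-x^2}\,P_1(x)$ has derivative that blows up at the endpoints). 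Second, and more fundamentally, you never set up an adversarial family of instances. With a single fixed two-eigenvalue $H_0$ and fixed block-encoding, the prover knows $\ket{\mu_0}$ and there exists a zero-query circuit that just outputs it; a lower bound on query complexity only makes sense over a family the algorithm cannot distinguish a priori, which is exactly what the Grover construction provides. Any repair of your argument would need to (a) parameterize a family of Hamiltonians in the class and show success probability is a low-degree polynomial in a parameter controlling that family, and (b) argue separation of that polynomial, at which point you essentially re-derive a search-type lower bound — which is what the paper's proof already does directly.
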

This is proved through a reduction to Grover's algorithm, using a suitably transformed version of the Hamiltonian $H(1/2)$
in Theorem 10 of \cite{Lin2020}.

\section{Block Encodings, Polynomial Approximation and Degree Reduction}
As stated above, to prepare the ground state we will form a quantum circuit that encodes a projector onto $\ket{\mu_0}$, accurate up to some error $\epsilon$. To do so we will peform quantum signal processing on a block-encoding of $\alpha^{-1}H_0$ for some normalization $\alpha$. We say that $U$ is an $(\alpha, m, \epsilon)$-block-encoding of $H_0$ if
\begin{align}
    || (\bra{0^m}\otimes I) U (\ket{0^m} \otimes I) - \alpha^{-1}H_0 ||_2 \leq \epsilon
\end{align}
i.e. that $U$ is a unitary operator using $m$ extra qubits beyond those used by $H = \alpha^{-1}H_0$, and the restriction of $U$ to the $\ket{0^m}$-flagged subspace is $\epsilon$-close in operator norm to $H$. Given a block-encoding, we can form a polynomial transformations of $H$ using the following theorem:
\begin{thm}[\cite{Gilyen2019}, Theorem 2]
Let U be an $(\alpha, m, 0)$-block-encoding of a Hermitian matrix $H_0$. Let $p \in \mathbf{R}[x]$ be a degree-$l$ even or odd real polynomial with $l > 0$ so that $|p(x)| \leq 1$ for any $x \in [-1,1]$. Then there exists a $(1, m+1, 0)$-block-encoding $\widetilde U$ of $p(\alpha^{-1}H_0)$ using $l$ queries of $U$, $U^\dagger$, and $O((m+1) l)$ other primitive quantum gates.
\end{thm}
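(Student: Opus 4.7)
The plan is to establish the theorem through the formalism of quantum signal processing (QSP) for single-qubit systems, lifted to the matrix setting via the eigendecomposition of $H_0$. The scheme is to build $\widetilde U$ as an alternating product of queries to $U$, $U^\dagger$, and ancilla-controlled phase rotations, with the phase angles chosen so that on each invariant two-dimensional subspace the circuit realizes the scalar QSP identity.

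First I would note that since $H_0$ is Hermitian it admits an eigendecomposition $H_0 = \sum_\lambda \lambda \ket{\lambda}\bra{\lambda}$, and for each eigenvector $\ket{\lambda}$ the block-encoding condition implies that $U$ maps $\ket{0^m}\ket{\lambda}$ into $\alpha^{-1}\lambda \ket{0^m}\ket{\lambda} + \sqrt{1 - (\alpha^{-1}\lambda)^2}\ket{\perp_\lambda}$ for some orthogonal state $\ket{\perp_\lambda}$. The two-dimensional subspace $\mathrm{span}\{\ket{0^m}\ket{\lambda}, \ket{\perp_\lambda}\}$ is invariant under both $U$, $U^\dagger$, and the projector-controlled phase gate $\Pi_\phi = e^{i\phi(2\ket{0^m}\bra{0^m}\otimes I - I)}$, on which $U$ acts as a $2\times 2$ SU(2) element with $\alpha^{-1}\lambda$ in the top-left corner.

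Next I would invoke the scalar QSP theorem: for any real polynomial $p$ of degree $l$ of definite parity with $|p(x)|\le 1$ on $[-1,1]$, there exist phases $\phi_0,\phi_1,\ldots,\phi_l$ such that the product
\begin{equation}
e^{i\phi_0 Z}\prod_{k=1}^{l} W(x)\, e^{i\phi_k Z}
\end{equation}
has upper-left entry $p(x)$, where $W(x)$ is the SU(2) matrix with $x$ in its $(1,1)$ position. Constructing $\widetilde U$ as the alternating product $\Pi_{\phi_0} U\, \Pi_{\phi_1} U^\dagger\, \Pi_{\phi_2}\cdots \Pi_{\phi_l}$ then reduces, on each invariant 2D subspace, to exactly this scalar QSP circuit with $x = \alpha^{-1}\lambda$. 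Summing over $\lambda$, the $\ket{0^m}\bra{0^m}$-flagged block of $\widetilde U$ equals $\sum_\lambda p(\alpha^{-1}\lambda)\ket{\lambda}\bra{\lambda} = p(\alpha^{-1}H_0)$. The single extra ancilla qubit (yielding $m+1$ total) is introduced to linearize the construction via a controlled QSP sequence conjugated by Hadamards, so as to extract the real part and realize the polynomial exactly rather than only up to an off-diagonal imaginary piece; this is where the parity restriction on $p$ is essential.

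The main obstacle is the existence of phases $\{\phi_k\}$ realizing the prescribed $p$. I would prove this by induction on $l$, peeling off the outermost factor at each step by inspecting the leading coefficient of the achievable polynomial pair, and showing that the residual polynomial satisfies the same $|\cdot|\le 1$ bound and reduced-degree parity condition, with the base case $l=0$ trivial. Given this existence result, the resource count follows directly: exactly $l$ queries to $U$ and $U^\dagger$ appear in the product, and each of the $l+1$ phase rotations $\Pi_{\phi_k}$ is implemented by a multi-controlled phase on the $m$-qubit ancilla at cost $O(m)$ primitive gates, giving $O((m+1)l)$ gates in total.
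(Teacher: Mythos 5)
This theorem is not proved in the paper at all: the authors import it verbatim from Gilyén, Su, Low, and Wiebe (ref.~\cite{Gilyen2019}, Theorem~2) and use it as a black box, so there is no internal proof to compare your attempt against. Your sketch is a reasonable outline of the standard quantum signal processing / QSVT argument that underlies the cited result, and the global structure (scalar QSP $\to$ lift to block-encoding via 2D invariant subspaces $\to$ ancilla for real-part extraction $\to$ phase-angle existence by induction $\to$ gate count) is the right one.

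That said, there is one step you should not wave through as you have. You assert that $\mathrm{span}\{\ket{0^m}\ket{\lambda}, \ket{\perp_\lambda}\}$ is invariant under $U$ and $U^\dagger$, with $U$ acting as a $2\times 2$ SU(2) element on it. This is \emph{not} true in general: the residual vector obtained from $U\ket{0^m}\ket{\lambda}$ and the one from $U^\dagger\ket{0^m}\ket{\lambda}$ need not be the same, so a single block-encoding $U$ does not by itself act blockwise on 2D subspaces. The correct statement requires either (i) the Jordan lemma / CS decomposition applied to the pair of projectors $\Pi = \widetilde\Pi = \ket{0^m}\bra{0^m}\otimes I$, which gives an alternating pair of 2D subspaces preserved by the \emph{interleaved} product $\Pi_{\phi_0}U\Pi_{\phi_1}U^\dagger\cdots$ rather than by $U$, or (ii) construction of a qubitized walk operator from $U$ together with a reflection about $\Pi$, which does act as a direct sum of $2\times 2$ rotations. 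Relatedly, the parity restriction on $p$ enters because QSVT naturally transforms the singular values of $\Pi U \Pi$, and only for definite-parity polynomials does the singular-value transformation coincide with the eigenvalue transformation of a Hermitian block; this is not quite the same thing as the real-part extraction via the extra ancilla, which you attribute it to. These are repairable gaps, but as written the central blockwise-reduction step is unjustified.
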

The theorem makes clear that the degree $l$ of the polynomial transformation $p$ is the key driver of circuit complexity, and the idea of this work is to show that for some class of $H$, the required $l$ for projection into the ground state can be systematically reduced.

The basic tools of quantum signal processing are these polynomial transformations of the input $H$, most often polynomial approximations of analytic functions. That holds true for our application, where the function at hand is a regulated version of the unit step function which we will take to be $\Theta_k(x) = \tfrac{1}{2}(1 + \erf kx)$ for $k > 0$. 

Note that, if $\Theta(x)$ is the exact step function, then $\Theta(H - x_0 I)$ is exactly the orthogonal projection onto the subspace spanned by eigenstates of $H$ with energy $\leq x_0$. Because we would like to approximate an analytic function, we smooth it with the regulator parameter $k$, which must be taken large enough so that the unit step occurs ``mostly'' (up to some chosen error $\epsilon$) between the ground state and first excited state; this induces a relationship between $k$, spectral gap $\delta$, and $\epsilon$. Because we can only block-encode polynomials of $H$, we will need to accept some approximation error, also bounded by $\epsilon$, in the final projection; for fixed $k$, this error decreases exponentially with the degree $n$ of the polynomial approximation. For a fixed target $\epsilon$, though, the scaling of the degree with the regulator $k$ is actually much worse: generally, we have $n \sim O(k)$, and this worst-case estimate is used extensively in prior work\cite{Lin2020}. However, in certain cases, a strictly lower degree truncation with $n < O(k)$ can achieve the desired accuracy, which we now describe. 

The key to achieving a speedup over the general case is by exploiting a feature of Chebyshev approximations on the interval $x \in [-1, 1]$: quickly-oscillating functions can be approximated (to a given precision) more efficiently when their oscillation occurs near the endpoints $x = \pm 1$. In particular, the Chebyshev approximation of the shifted step function $\Theta_k(x + 1)$ can be truncated at a lower degree than that of $\Theta_k(x)$ while maintaining the same desired precision. We will choose $k$ such that, given an error $\epsilon$, the error function is within $\epsilon$ of the step function outside of a region of width $\delta$:
\begin{align}
    |\Theta_k(x) - \Theta(x)| &\leq \epsilon \quad \quad |x| \geq \delta /2
\end{align}
The details of the approximation are made precise by the following Lemma, which adapts Lemma 16 from \cite{Low2017}.
\begin{lem}
Fix $\epsilon, \delta > 0$ and $\eta$ such that $|\eta| < 1 - \delta$, and let $k = \frac{\sqrt 2}{\delta}\log^{1/2}\frac{1}{2\pi\epsilon^2}$. Then the shifted error function $f(x) = \erf k(x - \eta)$ can be approximated to error $\epsilon$ on the interval $x \in [-1, 1]$ with a polynomial $p(x)$ such that $|p(x)| \leq 1$ for $|x| \leq 1$ and $p$ is of order $n =  O(\frac{\sqrt{1 - |\eta|}}{\delta}\log\frac{\sqrt{1 - |\eta|}}{\delta}\log^{3/2}\frac{1}{\epsilon})= \widetilde O(\frac{\sqrt{1 - |\eta|}}{\delta})$, where the $\widetilde O$ notation hides polylogarithmic factors in $\epsilon, \delta$ and $\eta$.
\end{lem}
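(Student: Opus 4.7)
The plan is to bound the Chebyshev coefficients of $f(x)=\erf(k(x-\eta))$ via complex analysis on the Bernstein ellipses, truncate at the appropriate degree, and finally renormalize to enforce $|p(x)|\le 1$. The key point, responsible for the improvement over the generic $\widetilde O(k)$ degree of Lemma~16 of \cite{Low2017}, is that the transition region of $f$ sits near $x=\eta$; when $|\eta|$ is close to $1$ this is near a focus of the Bernstein ellipses, where the ellipses are narrow and the entire extension of $f$ stays small.

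Since $f$ is entire, I would invoke the standard Chebyshev-coefficient decay estimate: for every $\rho>1$,
\begin{equation*}
|c_n|\le 2\rho^{-n}\,M(\rho),\qquad M(\rho)=\sup_{z\in E_\rho}|f(z)|,
\end{equation*}
where $E_\rho$ is the ellipse with foci $\pm 1$ and semi-axis sum $\rho$, and $c_n$ is the $n$th Chebyshev coefficient of $f$. Parametrizing $z=\cos(\theta+i\beta)$ with $\rho=e^\beta$ and writing $w=k(z-\eta)=u+iv$ with $u,v$ real, one has $u=k(\cos\theta\cosh\beta-\eta)$ and $v=-k\sin\theta\sinh\beta$. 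The elementary contour bound $|\erf(u+iv)|\le 1+\tfrac{2|v|}{\sqrt{\pi}}e^{v^2-u^2}$, obtained by integrating $\erf'$ along a vertical segment, then reduces everything to controlling $v^2-u^2$.

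The crucial step is to maximize $v^2-u^2=k^2[\sin^2\theta\sinh^2\beta-(\cos\theta\cosh\beta-\eta)^2]$ over $\theta$. Treating $\cos\theta$ as a free real variable gives the interior optimum $\cos\theta=\eta\cosh\beta/\cosh(2\beta)$, at which
\begin{equation*}
\max_\theta(v^2-u^2)=k^2\sinh^2\beta\cdot\frac{\cosh(2\beta)-\eta^2}{\cosh(2\beta)}.
\end{equation*}
For $\beta^2\ll 1-\eta^2$ this expands to $k^2\beta^2(1-\eta^2)(1+O(\beta^2))$, exposing the factor $1-\eta^2$ that drives the speedup. A coarser bound such as $v^2-u^2\le k^2\sinh^2\beta$ loses the $\eta$-dependence and recovers only the generic $\widetilde O(1/\delta)$ estimate.

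Inserting this into the Bernstein bound yields $|c_n|\lesssim k\beta\exp(k^2\beta^2(1-\eta^2)-n\beta)$; optimizing over $\beta$ at $\beta^*=n/(2k^2(1-\eta^2))$ gives $|c_n|\lesssim \exp(-n^2/(4k^2(1-\eta^2)))$ up to polynomial prefactors. Summing the geometric tail and demanding truncation error at most $\epsilon/2$ forces $N\gtrsim 2k\sqrt{(1-\eta^2)\log(1/\epsilon)}$ with a further $\sqrt{\log}$ correction from the number of tail terms. Substituting $k=O(\log^{1/2}(1/\epsilon)/\delta)$ and $1-\eta^2\le 2(1-|\eta|)$ produces $N=\widetilde O(\sqrt{1-|\eta|}/\delta)$ with the stated log factors; the constraint $|p|\le 1$ is enforced by noting $|f|\le 1$ and rescaling the truncation by $(1+\epsilon/2)^{-1}$, which only redefines $\epsilon$ by a constant. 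The main obstacle is the third step: one has to carry out the $\theta$-maximization exactly and then expand carefully in $\beta$, tracking the range in which the expansion is valid, since that range constrains the admissible $N$ and is what produces the extra $\log(\sqrt{1-|\eta|}/\delta)$ factor in the final degree.
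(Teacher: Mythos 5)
Your proposal is correct and takes essentially the same route as the paper's own proof: both bound Chebyshev truncation error by the supremum $M(\rho)$ of $|\erf(k(z-\eta))|$ on a Bernstein ellipse, reduce that to maximizing $\Real\bigl(-(k(z-\eta))^2\bigr) = v^2-u^2$ over the ellipse, and expand the maximizer for a narrow ellipse to expose the $1-\eta^2$ factor (your exact maximum $k^2\sinh^2\beta\,\frac{\cosh 2\beta - \eta^2}{\cosh 2\beta}$ is precisely the paper's $\alpha_0$ rewritten with $\rho=e^{\beta}$). The only cosmetic differences are that you optimize per-coefficient and sum the geometric tail rather than picking one $\rho$ and solving the truncation-error bound directly, and you spell out the $|p|\le 1$ rescaling, neither of which changes the argument.
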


The Lemma generalizes the $\eta = 0$ case employed in \cite{Lin2020} and formalizes the asymptotic speedup obtained when $|\eta| > 0$.
\begin{proof}
Expand $f(x) = \sum_j a_j T_j(x)$ in the Chebyshev basis; we will form $p(x)$ by truncating this series. As stated in \cite{Low2017}, the truncation error for $p_n(x) = \sum_{j=0}^n a_j T_j(x)$ is
\begin{align} \label{eq:1}
    \max_{x \in [-1,1]}|f(x) - p(x)| \leq \frac{2M\rho^{-n}}{\rho - 1} = O(\epsilon) \quad \quad M = \max_{z \in E_\rho}|f(x)|
\end{align}
for any $\rho > 1$, where $\rho$ is the parameter of the Bernstein ellipse $E_\rho = \{z: z = \frac{1}{2}(\rho e^{i\theta} + \rho^{-1}e^{-i\theta}\}$.
Using the same error estimates given in \cite{Low2017}, we may bound
\begin{align}
M \leq \max_{z \in E_\rho}\exp \Real \left(-(k(z - \eta))^2\right)
\end{align}
Over the range $0 \leq \theta < 2\pi$, the exponent $\alpha = \Real (k(z + \eta))^2$ achieves its maximum value of 
\begin{align}
    \alpha_0 = \frac{k^2}{4\rho^2(1 + \rho^4)}(\rho^2 -1)^2(1 - 2\eta^2\rho^2 + \rho^4)
\end{align}

If we choose $\rho = 1 + a$ and expand around small $a$, then we find that $\alpha_0 = 2k^2(1 - \eta^2)a^2 + O(a^3)$. Requiring that $\alpha_0 = O(1)$ fixes the magnitude of $a$ to
\begin{align}
    a = O\left(\frac{1}{\sqrt{k^2 (1 - \eta^2)}} \right) = O\left(\frac{\delta}{\sqrt{1 - \eta^2}\log^{1/2} 1/\epsilon} \right)
\end{align}
after substituting in $k = O(\delta^{-1}\log^{1/2}(1/\epsilon))$. Note that so long as $|\eta| < 1 - \delta$, which is always true, $a$ is indeed asymptotically small and scales at least as strongly as $O(\sqrt{\delta})$. We are interested in the case where $|\eta|$ is near 1, so we may set $\sqrt{1 - \eta^2} = O(\sqrt{1 + \eta}\sqrt{1 - \eta}) = O(\sqrt{1-|\eta|})$.
Thus, the approximation error from (\ref{eq:1}) is
\begin{align}
    \epsilon = O(a^{-1}e^{-na}) \implies n &= O\left(\frac{\sqrt{1 - |\eta|}}{\delta} \log\frac{\sqrt{1 - |\eta|}}{\delta}\log^{3/2}\frac{1}{\epsilon} \right) \\
    &= \widetilde O\left(\frac{\sqrt{1 - |\eta|}}{\delta}\right)
\end{align} as desired.

\end{proof}

The Lemma shows that $p(x)$, the polynomial that we will block-encode to implement ground state projection, is exactly the order-$n$ Chebyshev approximation to the (regulated, shifted) step function $\Theta_k$; an essential difference from prior implementations is that when the shift is large, the degree $n$ is asymptotically smaller for fixed $\epsilon$ than at zero shift.

To make use of this result, we will always put the location of the shift between the ground state energy $\mu$ and the first excited state $\mu + \delta$, so that $\eta = \mu + \delta/2$. In particular, if $\mu = -1 + O(\delta)$, then the degree is $n = \widetilde O(\frac{1}{\sqrt{\delta}})$. Essentially, if the location $\mu$ of the unit step is close to $-1$ (relative to the width $\delta$) then the dependence of $n$ on $\delta$ can be reduced by up to a square-root factor relative to the unshifted case where $n = \widetilde O(\frac{1}{\delta})$.

\begin{figure}[ht]
\centering
\includegraphics[width=0.98\textwidth]{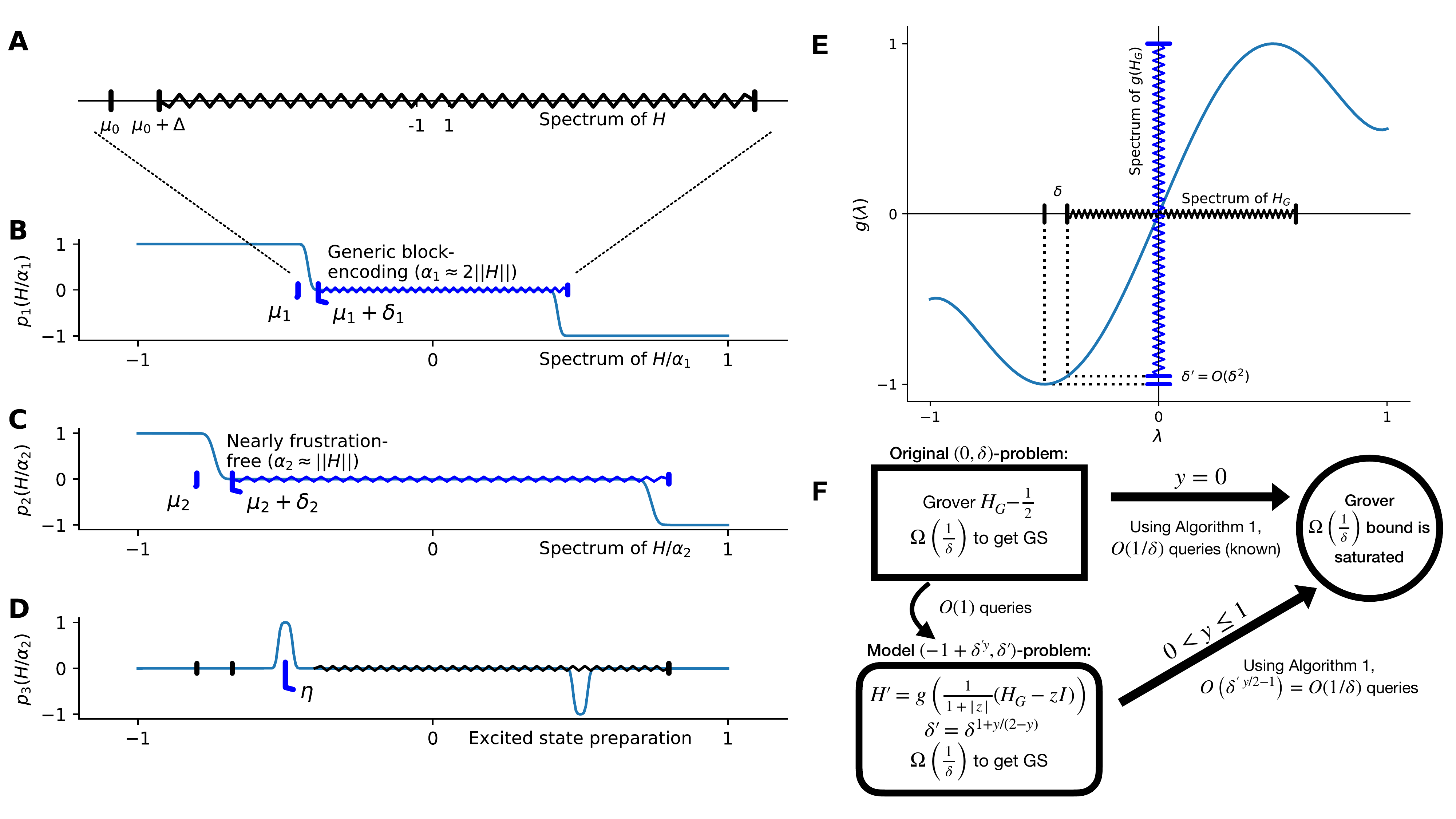}

  \caption{In A, the spectrum of $H$, the Hamiltonian whose ground state $\ket{\mu_0}$ we would like to prepare. To block-encode, $H$ must be normalized to fit its spectrum in $[-1,1]$ as shown in B and C, along with the polynomial functions $p_1(H/\alpha_1)$ and $p_2(H/\alpha_2)$ that increase the effective gap to $\Omega(1)$. In B, the normalization $\alpha_1$ is such that $\mu_1 = \mu_0/\alpha_1$ is far from $-1$ compared to the normalized gap $\delta_1 = \Delta/\alpha_1$, while in C another normalization $\alpha_2$ gives us $\mu_2 < -1 + \sqrt{\delta_2}$. Because Chebyshev polynomials oscillate faster near the endpoints of the unit interval, $p_2$ requires a lower degree than does $p_1$. D shows the analogous process for preparing excited states. In E, we show the process of generating $g(\tfrac{1}{1 + |z|}(H_G - zI))$, with $z$ defined below, whose spectrum is shown on the vertical axis. This defines a model $(\mu, \delta)$-problem that we use to show optimality. Although its ground state is close to -1, note that the new gap is asymptotically smaller. Diagram F shows the logical flow of the optimality proof for Proposition 2. The original Grover Hamiltonian with gap $\delta = \frac{2}{\sqrt{N}}$ must take $\Omega(1/\delta)$ time to solve.  Using this Hamiltonian we show optimality of our algorithm for all $0\leq y \leq 1$. Optimality of $y=0$ follows directly from the top arrow as the algorithm solves the $y=0$ problem optimally (this optimality was already known from previous works  \cite{Lin2020}).  For all other $y$, we map $H_G$ to $H'$, with gap $\delta^{1+y/(2-y)}$, which when solved by our algorithm finds the marked element of Grover's in time $1/\delta$, demonstrating optimality for all $y$.  }
  \label{fig:grover-graph}

\end{figure}

\section{Projecting to the Ground State}

By using the Chebyshev approximation of $f(x)$ from Lemma 1 as our projector, we can apply this speedup to the ground state preparation problem, identifying $\mu$ with the (normalized) ground state energy and $\delta$ with the spectral gap. Because QSP works best with polynomials of definite parity, we need to do the projection in two steps. First, using the odd component of $p(x)$, we generate a new ``effective gap'' in the spectrum of size $\delta' = 1 + O(\epsilon)$; this has query complexity $\widetilde O(\frac{\sqrt {1 + \mu + \delta/2}}{\delta})$.
Second, we use the method of \cite{Lin2020} to project into the ground state; this has complexity $\widetilde O(\frac{1}{\delta'}) = O(1)$. We compose these two quantum circuits and thus multiply their complexities, for a final query complexity that has a dependence on $\mu$ and $\delta$ of $\widetilde O(\frac{\sqrt {1 + \mu + \delta/2}}{\delta})$. This algorithm furnishes a proof of Proposition 1 when we have $\mu = -1 + \delta^y$ for some $0 \leq y \leq 1$.

\begin{proof}{Proof of Proposition 1.}
Fix $H_0$ and its $(\alpha, m, 0)$ block-encoding $U_1$, and denote  $p_\text{odd} = \frac{1}{2}(p(x) - p(-x))$ for the order-$n$ polynomial $p$ defined in Lemma 1. Given $U_1$, there is a $(1, m+1, 0)$ block-encoding $U_2$ of $p_\text{odd}(H)$ that uses $n$ queries to $U_1$ and $U_1^\dagger$ as well as $O((m+1)n)$ other primitive quantum gates \cite{Gilyen2019}. By construction, $p(\mu) \leq -1 + O(\epsilon)$ and $|p(\lambda)| = O(\epsilon)$ for any eigenvalue with $\lambda > \mu$, so the spectral gap of $p_\text{odd}(H)$ is now $\Omega(1)$. Thus, we may use the algorithm of \cite{Lin2020} to project into the ground state $\ket{\mu}$ using $O(\frac{1}{\gamma} \log \frac{1}{\epsilon})$ queries to $U_2$, or 
\begin{align}
    O\left(\frac{n}{\gamma} \log \frac{1}{\epsilon}\right) 
    = O\left(\frac{1}{\gamma}\delta^{y/2 - 1} \log^{5/2} \frac{1}{\epsilon}\log\frac{1}{\delta}\right) 
    = \widetilde O\left( \frac{1}{\gamma}\delta^{y/2 - 1} \right)
\end{align}
queries to $U_1$ and $U_1^\dagger$, as desired.
\end{proof}

\section{Estimating an Unknown Ground State Energy}
The results of the previous section assume that the ground state energy $\mu$ is known exactly; we now relax this condition and give the cost for when the exact ground state energy, and more generally the frustration-free parameter $y$, is unknown. In general, this will add additional factors logarithmic in $\gamma$ and $\delta$, and will modify the frustration-free exponent $y$.  We still require that $\gamma$ is known, although it can be assumed that only a lower-bound on $\gamma$ is known instead.

More precisely, we have the following lemma and its corollary:

\begin{lem}
    Assume that the ground state energy $\mu$ is unknown, but also that $y'$ is a known lower bound for $y$ and that $\gamma$ is a known lower bound for $|\braket{\phi|\mu_0}|$. Then the query complexity from Proposition 1 is modified to
    \[\widetilde O\left(\frac{\delta^{y'/2 - 1}}{\gamma}\right)\]
\end{lem}
\begin{cor}
    If $y$ is \textit{a priori} unknown (but $\gamma$ is known), then the query complexity from Proposition 1 is modified to
    \[\widetilde O\left(\frac{\delta^{(y - \nu)/2 - 1}}{\gamma}\right)\]
    where $\nu$ is a constant that can be made arbitrarily small with $\poly(1/\nu)$ additional overhead.
\end{cor}
Additionally, in both cases, the query complexity gains a factor of $\log \frac{1}{\delta}\,\log\frac{1}{\gamma}$.

The procedure to estimate the ground state energy is exactly that described in Section 4 of \cite{Lin2020}, with the appropriate modifications of the cost dependence on $\delta$ when operating in the nearly-frustration-free regime, augmented with monotonically increasing search for $y$ for Corollary 1. 

Informally, the idea is that the true ground state projection circuit succeeds with probability $\gamma$. Moreover, if the energy cutoff $\mu'$ of the projection circuit is set too low (meaning, $\mu' < \mu$ where $\mu$ is the true normalized ground state energy) then the circuit essentially always fails, as in that case $\Theta(H - \mu')$ is $\epsilon$-close in norm to the zero projector. We use this phenomenon to accurately estimate $\mu$ by running the circuit with different energy cutoffs, each time measuring the success probability $q$. If $q \geq \gamma$ then the current cutoff is in fact greater than $\mu$ and we thus must decrease it, and if $q \approx 0$ then we must increase it.

We should recall that success and failure in this context are referring back to qubitization/quantum signal processing/QSVT, which is the framework by which we construct the quantum circuit for our (nonunitary) $f(H)$. Because of the non-unitarity, the circuit does not always “succeed” in that the result of the circuit in the main register is conditional upon an ancilla flag register \cite{Low2019}. If the flag is measured $\ket{0}$ then the main register has been acted upon by the operator $f(H)$, in which case the circuit has “succeeded”; if it is any other state, the main register has been acted upon by a different operator and the circuit has “failed” in that the register does not contain the desired result.

We do this binary search for the true ground state energy of the nearly-frustration-free energy interval $[-1, -1 + \delta^{y'}]$, which must terminate when the precision becomes of order $\delta$. Estimating the success probability requires $O(1/\gamma)$ applications of the projection circuit,  binary search of the interval requires $O((1-y')\log \frac 1 \delta)$ applications, and each projection circuit has query complexity $O(\delta^{y'/2 - 1} \log^{3/2} \frac{1}{\epsilon})$. Multiplying these together yields the total final query complexity to estimate the ground state energy, which is identical to the ground state preparation cost (with frustration-free exponent $y'$) up to logarithmic factors.

The above discussion is actually incomplete because the described estimation of $q$ is itself probabilistic; one must perform additional repetitions to depending on the desired success probability. This adds a final logarithmic factor which appears in \cite{Lin2020} Theorem 8.

\begin{proof}{Proof of Lemma 2. }
We must search for the true ground state energy in the nearly-frustration-free interval $[-1, -1 + \delta^{y'}]$, where $y' \leq y$ is known. This may be accomplished with the binary search method described in \cite{Lin2020} Theorem 8 using the projection circuits of Proposition 1. As in that Theorem, binary search incurs an additional multiplicative overhead of $O((1-y')\log \frac{1}{\delta}\,\log\frac{1}{\gamma}\,\log\frac{\log 1/\delta}{\vartheta})$
where the success probability is $1 - \vartheta$, and the precision $\epsilon$ is taken to be of order $\delta$. The query complexity to find the ground state energy is therefore
\[O\left((1-y')\frac{\delta^{y'/2 - 1}}{\gamma} \log^{7/2}\frac{1}{\delta}\log\frac{1}{\gamma}\log\frac{\log 1/\delta}{\vartheta}\right)\]

Once the true ground state energy is known, the algorithm from Proposition 1 is subsequently employed to prepare the ground state. The final query complexity is 
\begin{align}O\left(\frac{\delta^{y'/2 - 1}}{\gamma}\left((1-y')\log^{5/2}\frac{1}{\delta}\log\frac{1}{\gamma}\log\frac{\log 1/\delta}{\vartheta} + \log^{5/2}\frac{1}{\epsilon}\right)\log\frac{1}{\delta}\right)
\end{align}
\end{proof}

To obtain the corollary, we must also determine the correct value of $y$. Binary search is not efficient here because the total cost of the algorithm will be set by the minimum $y$ among all executed ground state projection circuits, and in ``vanilla'' binary search of the interval $[0,1]$, the first attempted value is $1/2$, thereby constraining the frustration-free exponent $y$ from the complexity statement (11) to be $\geq 1/2$ which may be much larger than is actually necessary. We avoid this problem by searching for the correct value of $y$ monotonically from below. 

\begin{proof}{Proof of Corollary 1.}
Discretize the $y$ domain $[0,1]$ into $m$ bins, where $m$ is a chosen constant, by sequentially setting $y' = j/m$ for $j = m,\dotsc,1$, and then testing those values in decreasing order as lower bounds for Lemma 2. Decomposing $y = k/m + \nu$ with $k \leq m$ and $\nu < 1/m$, the algorithm will fail when $j > k$ because the ground state energy will not be found in the energy search interval. Conversely, the algorithm will succeed on step $m - k + 1 $, with frustration-free exponent $y' = k/m = y - \nu$. The exponent precision $\nu$ can be made arbitrarily small by increasing $m$, incurring an additional overhead of $1/\nu$.
\end{proof}

\section{Optimality of the Algorithm}
The algorithm of Proposition 1 achieves the optimal dependence of its query complexity on $\delta$ and $\mu$, up to a factor of $\log \frac 1 \delta$, in the case where the starting ansatz is good i.e. $\gamma = \Omega(1)$. This lower bound is  formalized by Proposition 2, and we prove it by adapting the optimality argument presented in \cite{Lin2020} and thus reducing unstructured search to the ground state preparation problem. 
We will focus on a ``Grover Hamiltonian'' $H_G$ defined as follows: given a Hilbert space of dimension $N$, define the usual marker oracle $U_t$ and Grover diffusion operator $D$ as
\begin{align}
    U_t = I - 2\ket{t}\bra{t} \quad\quad D = I - 2\ket{u}\bra{u} 
    \quad\quad \ket{u} = \frac{1}{\sqrt{N}}\sum_s \ket{s}
\end{align}

The Hamiltonian $H_G = \frac{1}{2}D + \frac{1}{2}U_t$ encodes in its ground state $\ket{\mu} \propto \ket{u} + \ket{t}$ the solution $\ket{t}$ of the corresponding unstructured search problem; the spectral gap is $\delta_G = \frac{2}{\sqrt{N}}$ and the ground state energy is $\mu_G = -\delta_G/2$. Thus, for any chosen (small) $\delta > 0$, one can choose a large enough $N > (2/\delta)^2$ such that $H_G$ has gap $\delta_G < \delta$; this fixes the ground state energy as well. We take the starting ansatz to be $\ket{u}$, which gives us $\gamma = \braket{u|\mu} \approx \frac{1}{\sqrt{2}} + O(1/\sqrt{N}) = \Omega(1)$.

Our algorithm from Proposition 1 works with ground state energies that are near -1 and, hence, a ground state energy very close to 0 doesn't achieve any speedup relative to the generic case.  One first naive attempt to prove this lower bound is to simply shift the Hamiltonian so the ground state is closer to -1.  This will improve the bound over the generic case but doesn't saturate the upper bound of Proposition 1. Instead, we perform another transformation of $H_G$ such that the algorithm of Proposition 1 on this transformed Hamiltonian solves Grover's in time $O(\sqrt{N})$; any algorithm faster than one in Proposition 1 then would violate the known lower bound of Grover's. A visual representation of this transformation is shows in Figure 1(E), and the logical flow of the proof is represented in Figure 1(F), along with corresponding discussions in the figure caption. In the rest of this section, we describe the optimality result in detail.

It is useful to measure the ground state energy relative to the endpoint $-1$, so we define the notation $\eta = O_{-1}(\mu)$ to mean that the number $\eta$ is asymptotically closer to $-1$ than is $\mu$, or more precisely that $\eta + 1 < c(\mu + 1)$ in the $\delta \to 0$ limit for constant $c$. In this section, we refer to preparing the ground state of a (normalized) Hamiltonian $H$ with ground state energy $O_{-1}(\mu)$ and spectral gap $O(\delta)$ (we will ignore small constant factors) as a $(\mu, \delta)$-problem. 

The lower bound of ref.\cite{Lin2020} for generic Hamiltonians can be represented in the following way. For any $\delta$ there is a corresponding $(0, \delta)$-problem that encodes the solution to the unstructured search, which has a query complexity lower bound of $\Omega(1/\delta)$, thus proving that the generic $(0, \delta)$-problem is $\Omega(1/\delta)$.  We need to formulate a lower bound for the general $(\mu, \delta)$-problem when $\mu$ is asymptotically close to $-1$.

We can now see that naive shifting of $H$ by a constant does not give a tight lower bound for the query complexity in this situation. With $\delta > 0$ (with $\delta \ll 1$) and $y$ such that $0 \leq y < 1$ both fixed, there is a Grover Hamiltonian $H_G$ with gap $\delta^{1-y}$, and we can shift it to  $H_\textrm{shifted}=H_G - zI$ for any real number $z$. This shift affects both $\mu$ and $\delta$ because it modifies the block-encoding normalization $\alpha$. Requiring that the $H_\textrm{shifted}$ has gap $\delta$ fixes $z$ and results in a $(-1 + \delta^y, \delta)$-problem.  Since the ground state of $H_\textrm{shifted}$ can't be prepared in time faster than $\Omega(1/\delta^{1-y})$, the generic  $(-1 + \delta^y, \delta)$-problem has a lower bound of $\Omega(\delta^{y - 1})$ queries when $y < 1$.

On the other hand, the algorithm of Proposition 1 when $\mu = -1 + \delta^y$ solves any $(-1 + \delta^y, \delta)$-problem with query complexity $O(\delta^{y/2 - 1} \polylog \tfrac{1}{\delta})$, which can be quite a bit larger than $\delta^{y - 1}$ in the $\delta \to 0$ limit that we are considering. We will need an additional step to raise the Grover-derived lower bound and prove that the $y/2 - 1$ exponent of our algorithm is indeed optimal.

\begin{proof}{Proof of Proposition 2.} The extra ingredient required is another polynomial transformation of the input $H_G$, denoted $g(x)$. We'll use $g(x)$ to map $\mu$ to approximately $-1$, and we'll see that this mapping can be made to shrink the gap $\delta$ less than the simple shift.
This polynomial must have two properties: it must have minimum value $\min_x g(x) = -1$ on the interval $x \in [-1, 1]$, and it must be one-to-one in a sufficiently large region around the ground state. Specifically, we will use the explicit polynomial $g(x) = ax + bx^3 + cx^5$ with $a = 19/6, b = -16/3, c = 8/3$, which achieves its minimum of $-1$ at $x = -\frac 1 2$. Now, we create our model $(\mu', \delta')$-problem in two steps: first, we introduce the $(-\tfrac{1}{2} + x_0, \delta)$-problem $H$, which has the same spectrum as $H_G$ but is shifted by a constant and scaled such that its ground state is exactly $-\frac{1}{2} + x_0$ for some chosen $x_0 > 0$, with $|x_0| < 1/2$:
\begin{align}
    H = \frac{1}{1 + |z|}\left(H_G - z I\right) \quad \quad z = \frac{1 - 2x_0 - \delta}{1 + 2x_0}
\end{align}
Using the linear-combination-of-unitaries approach \cite{Low2019}, $H$ can be block-encoded with $\alpha = 1$. Then, we map this $H$ through the polynomial $g$, and $g(H)$ will define a particular $(-1 + \delta'^y, \delta')$-problem with an associated lower bound in terms of $\delta'$.

We choose the offset as $x_0 = \delta^\nu$ for some $0 < \nu \leq 1$; note that $x_0$ is small in the limit $\delta\to 0$ that we are concerned with. (This shift does not asymptotically affect the gap size.) Since $g'(-\tfrac{1}{2}) = 0$ and $g''(-\tfrac{1}{2}) = O(1)$, we have
\begin{align}
    g(\mu) &= g(\tfrac{1}{2} + x_0) \simeq -1 + x_0^2 \\
    g(\mu + \delta) &\simeq -1 + x_0^2 + 2x_0\delta + \delta^2
\end{align}
Since the new gap of $g(H)$ is $g(\mu + \delta) - g(\mu)$, we have used $g$ to map the $(0, \delta)$-problem to a $(-1 + x_0^2, 2x_0\delta + \delta^2)$-problem. Since $x_0 = \delta^\nu$, to leading order in $\delta$, we've reduced the Grover problem to a $(-1 + \delta^{2\nu}, \delta^{\nu + 1})$-problem, or, rescaling $\delta$ to $\delta' = \delta^{\nu + 1}$, a $(-1 + \delta'^{\frac{2\nu}{\nu + 1}}, \delta') = (-1 + \delta'^{y}, \delta')$-problem for any $0 < y \leq 1$. 

Because $g(H)$ and $H$ have the same ground state, and because querying $g(H)$ once requires $O(1)$ queries of $H$, the original $\Omega(1/\delta)$ bound for preparing the ground state of $H$ also applies to $g(H)$. This implies that the solution to the generic $(-1 + \delta'^{y}, \delta')$-problem for any $y \leq 1$ has query complexity bounded by
\begin{align}
    \Omega(\tfrac{1}{\delta}) = \Omega\left(\delta'^{\tfrac{-1}{1 + \nu}}\right) = \Omega\left(\delta'^{\tfrac{y}{2} - 1}\right)
\end{align}
as desired.

\end{proof}

\section{Preparing Excited States with Projectors}
A straightforward generalization of the projection polynomial $p(x)$ from Section 3 allows the preparation of excited states. This extends the work of \cite{Lin2020Filtering} in that it provides a speedup for preparing low-lying excited states similar to that obtained above for nearly-frustration-free ground states.  More precisely, given a normalized excited-state energy $\eta$ (with, as before, $|\eta| \leq 1$), and a radius $\delta_1$ which separates the excitation from both the ground state and other excitations, we can form a new polynomial $p_\eta$ such that $p_\eta(H)$ selects the subspace of eigenstates with energy $\lambda \in (\eta - \delta_1, \eta + \delta_1)$. Moreover, the degree of $p_\eta$ scales exactly the same way as that of the ground state projector $p$, as $n = \widetilde O\left(\frac{\sqrt{1 + \eta}}{\delta_1}\right)$.  

\begin{cor} If a Hamiltonian $H$, with block-encoding $U$, has a unique first excited state $\ket{\eta}$ separated by an energy gap of at least $\delta_1$ from the rest of the spectrum, it can be prepared to error $\epsilon$ with $\widetilde O\left(\frac{\sqrt{1 + \eta}}{\gamma\delta_1}\right)$ queries to $U$ and $U^\dagger$, where $\gamma = |\braket{\phi_0|\eta}|$ is the overlap with the starting ansatz $\ket{\phi_0}$.

\begin{proof} As in Lemma 1, we use shifted error functions to accomplish this. Note that the function $f_1(x) = \tfrac{1}{2}(\erf k(x - \eta - \delta_1/2) - \erf k(x - \eta + \delta_1/2))$, with $k = O(\frac{1}{\delta_1}\log^{1/2}\frac{1}{\epsilon})$, satisfies $f_1(x) \leq -1 + O(\epsilon)$  when $|x - \eta| \leq \delta_1/2$ and $|f_1(x)| \leq O(\epsilon)$  when $|x - \eta| \geq \delta_1$. By the estimates of Lemma 1, $f_1(x)$ can be approximated to error $\epsilon$ with a degree-$n_1$ polynomial $p_1$, where

\begin{align}
    n_1 = O\left(\frac{\sqrt{1 + \eta}}{\delta_1}\log \frac{\sqrt{1 + \eta}}{\delta_1}\log^{3/2}\frac 1 \epsilon\right)
\end{align}
Again form the antisymmetric $p_{1,\text{odd}} =(p_1(x) - p_1(-x))$; as before, the spectral gap of $p_{1,\text{odd}}(H)$ is now $\Omega(1)$, but the ground state is now $\ket{\eta}$ with eigenvalue $\lambda_\eta = -1 + O(\epsilon)$. We can then use the algorithm of \cite{Lin2020} to project to $\ket{\eta}$ with the same query complexity as in Proposition 1, or
\begin{align}
    O\left(\frac{\sqrt{1 + \eta}}{\delta_1\gamma} \log^{5/2} \frac{1}{\epsilon}\log\frac{1}{\delta}\right) 
    = \widetilde O\left( \frac{\sqrt{1 + \eta}}{\delta_1\gamma} \right)
\end{align}
queries to $U$ and $U^\dagger$, as desired.
\end{proof}
\end{cor}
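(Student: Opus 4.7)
The plan is to reduce first-excited-state preparation to ground-state preparation of a polynomially filtered Hamiltonian, and then invoke the machinery of Proposition~1. First, I would build a bump function centered at $\eta$ with width $\delta_1$ as a difference of shifted error functions,
\begin{align*}
f_1(x) = \tfrac{1}{2}\bigl[\erf k(x - (\eta - \delta_1/2)) - \erf k(x - (\eta + \delta_1/2))\bigr],
\end{align*}
choosing $k = O(\delta_1^{-1}\log^{1/2}(1/\epsilon))$ so that each transition is sharp at the scale $\delta_1$. By construction, $f_1(x) \leq -1 + O(\epsilon)$ inside $[\eta - \delta_1/2, \eta + \delta_1/2]$ and $|f_1(x)| = O(\epsilon)$ outside a slightly enlarged window, matching the gap hypothesis which ensures all other eigenvalues of $H$ lie in the latter region.

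Next, I would apply Lemma~1 separately to each of the two shifted error functions. Both shifts are within $O(\delta_1)$ of $\eta$, so in Lemma~1's notation $1 - |\eta_{\pm}|$ is $O(1 + \eta)$ up to subleading corrections in the regime where $\eta$ is close to $-1$ (and is $O(1)$ otherwise). This yields a polynomial approximation $p_1$ of $f_1$ of degree $n_1 = \widetilde O(\sqrt{1+\eta}/\delta_1)$ with $|p_1(x)| \leq 1$ on $[-1,1]$. To make $p_1$ amenable to QSVT I would take its odd part $p_{1,\text{odd}}(x) = \tfrac{1}{2}(p_1(x) - p_1(-x))$, exactly as in the proof of Proposition~1. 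The reflection contributes an extra bump near $-\eta$ on the opposite side of the spectrum, which for $\eta$ close to $-1$ lies safely away from the spectrum of $H$ near $\eta$; consequently $p_{1,\text{odd}}(H)$ retains $\ket{\eta}$ as the essentially unique most-negative eigenvector with eigenvalue $-1 + O(\epsilon)$, and all other eigenvalues bounded away by $\Omega(1)$.

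Finally, I would apply the ground-state preparation algorithm of \cite{Lin2020} to the block-encoding of $p_{1,\text{odd}}(H)$, starting from $\ket{\phi_0}$ with overlap $\gamma$ on $\ket{\eta}$. Because the effective gap is $\Omega(1)$, this takes $O(\gamma^{-1}\log(1/\epsilon))$ queries to the block-encoding of $p_{1,\text{odd}}(H)$, each of which costs $n_1$ queries of $U$ and $U^\dagger$ by the QSVT theorem; multiplying gives the claimed $\widetilde O(\sqrt{1+\eta}/(\gamma\delta_1))$. The main obstacle I anticipate is verifying that Lemma~1's $\sqrt{1-|\eta|}$ factor translates cleanly into a $\sqrt{1+\eta}$ factor even though the two erf shifts sit at $\eta \pm \delta_1/2$ rather than exactly at $\eta$, and separately checking that the odd-symmetrization step does not accidentally create a spurious low-lying eigenvalue of $p_{1,\text{odd}}(H)$ that would spoil the effective $\Omega(1)$ gap.
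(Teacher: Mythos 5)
Your proposal follows essentially the same approach as the paper's own proof: the identical difference-of-erf bump $f_1$, the same choice of $k$, the same appeal to Lemma~1 for the degree bound, the same odd-symmetrization, and the same invocation of the projection step from \cite{Lin2020}. You are slightly more explicit than the paper in justifying how Lemma~1's $\sqrt{1-|\eta|}$ factor applies to the two shifts $\eta \pm \delta_1/2$ (and your reasoning that $1 - |\eta \pm \delta_1/2| = O(1+\eta)$ since $1+\eta \geq \delta_1$ is correct), which the paper simply summarizes as ``by the estimates of Lemma~1.''

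One small point worth settling: the worry you flag at the end about the reflection possibly creating a spurious low-lying eigenvalue is easily dismissed once you track signs. Since $p_1$ is a \emph{negative} bump near $\eta$ and $\approx 0$ elsewhere, the term $-p_1(-x)$ in $p_{1,\text{odd}}$ contributes a \emph{positive} bump near $-\eta$, so any eigenvalues of $H$ sitting near $-\eta$ get pushed \emph{up} toward $+1$, never down toward $-1$; there is no competition for the minimum. Relatedly, with the $\tfrac{1}{2}$ prefactor you wrote, $p_{1,\text{odd}}(\eta) \approx -\tfrac{1}{2}$ rather than $-1$ (because $p_1(-\eta) \approx 0$ for the bump, unlike the step function of Proposition~1 where $p(-\mu)\approx 1$). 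This changes nothing: the effective gap of $p_{1,\text{odd}}(H)$ is still $\Omega(1)$, which is all the \cite{Lin2020} step needs.
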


Although we can form $p_\eta$ for any $|\eta| \leq 1$, as with ground state preparation an additional speedup is possible if $\eta$ is close to the endpoints $-1$ or $1$. This could be the case for the low-lying excited states of a nearly-frustration-free $H_0$, for the following reason: Suppose the Hamiltonian $H$ has a unique first excited state $\ket{\eta}$. Then, if the ground state energy is $\mu = -1 + \delta^y$ and $y \leq 1$, then we have $\eta = \mu + \delta = -1 + \delta^y + \delta = O_{-1}(-1 + \delta^y) = O_{-1}(\mu)$. Because $\delta = O(\delta^y)$, this claimed equality follows for these shifted values relative to -1. This implies that the query complexity of this first-excited-state preparation problem has gap dependence $\widetilde O(\delta^{y/2}/\delta_1)$, where $\delta_1 = \min(\delta, \delta')$ and $\delta'$ separates $\ket{\eta}$ from the next excited state.

\section{Example Hamiltonians with Speedup}

In this section, we consider a class of Hamiltonians for which our algorithms achieve a speedup over the standard approaches.  In particular, we discuss gapped systems which have the structure of a sub-volumetric perturbation added to a frustration-free Hamiltonian. Note that for frustration free Hamiltonians, our algorithm gives the expected $1/\sqrt{\Delta}$ dependence as one would achieve from spectral gap amplification.

Sub-volumetric perturbations show up in a number of interesting physical scenarios; we now enumerate some examples.  To begin with, many physical systems have defects (i.e. line or point defects) which intercolate amongst the bulk degrees of freedom. For example, a frustration free square lattice system of size $N \times N$ perturbed by a density of $N$ impurities (i.e. approximately one per row), or a constant number of such line-defects, can receive an efficiency gain from our algorithm if the system remains gapped, as described below. Note that there is a long history of interesting physics related to impurity problems, with relevance to algorithmic methods such as DMFT \cite{RevModPhys.68.13} and NRG \cite{Pizorn2012}. 

In addition to defect problems, there is interesting physics that often happens at interfaces \cite{RevModPhys.89.025006}.
One can consider a system with two frustration-free Hamiltonians which are connected at an interface.  While the 2-d bulk of both of these models will be frustration free, the 1-d interface will have Hamiltonian terms that generically are frustrating. Our algorithm can be more efficient at finding ground states in this case. As a final example, quantum computers are expected to eventually ``replace'' classical simulation techniques for finding ground states. A very common technique in numerical methods such as DMRG is to introduce an additional pinning field \cite{Stoudenmire2012, PhysRevX.3.031010} to promote certain phases or test robustness of phase diagrams. The introduction of this pinning field on the boundary is sub-volumetric and hence has a scaling with $\Delta$ for which our algorithm applies.  

Each of these scenarios follow from the decomposition of the full system Hamiltonian into a frustration-free piece $H_0$ and a perturbation $V$.
Here we explicitly show that such Hamiltonians, so long as they remain gapped, have the key property that $\mu < -1 + \delta^y$ for $y > 0$. In general, it is difficult to determine if any given family of Hamiltonians is gapped in some limit, but for concreteness, we consider two simple scenarios below.

First, consider a Hamiltonian defined by $N^2$ frustration-free projectors $H_0 = \sum_{j=1}^{N^2} (1-\Pi_j)$, with ground state $\ket{\psi}$ satisfying $\Pi_j \ket{\psi} = \ket{\psi} \,\forall j$, and suppose $H_0$ has spectral gap $\Delta_0 \geq 1/2$. Add to this a perturbation composed of $N$ projectors $H_1 = \sum_{k=1}^N (1 - T_k)$, where each $T_k$ selects states only of energy $> 2N$, i.e. for all states $\ket{\phi}$,
\begin{align}
   \braket{\phi|H_0|\phi} \leq 2N \implies T_k\ket{\phi} = 0\, \forall k
\end{align}
Then, we may say the following about $H = H_0 + H_1$:
\begin{enumerate}
    \item $H$ has the same ground state $\ket{\psi}$, and its gap satisfies $\Delta \geq 1/2$
    \item $H$ is nearly-frustration-free with parameter $y = \tfrac{1}{2}$
    
\end{enumerate}

\begin{proof}{Proof of 1.} To begin, see that by the definitions, both $H_0 \geq 0$ and $H_1 \geq0$ as operators. Also note that all eigenstates of $H_0$ with energy $ \leq 2N$, including $\ket{\psi}$, are also eigenstates of $H_1$ and thus of $H$. If such an eigenstate has energy $E$ with respect to $H_0$, it has energy $N$ w.r.t $H_1$ and thus $E + N$ w.r.t $H$. To see this, let $\ket{\zeta}$ be such an eigenstate of $H_0$. Then $T_k\ket{\zeta} = 0 \forall k$ by hypothesis, so $H_1\ket{\zeta} = N\ket{\zeta}$ and thus $H\ket{\zeta} = (H_0 + H_1)\ket{\zeta} = (E+N)\ket{\zeta}$.

From this, we can deduce that $\ket{\psi}$ is the ground state of $H$ with energy $N$. We know already that $\ket{\psi}$ is an eigenstate of $H$ of energy $N$, but we must show that if $\ket{\zeta}$ is another an eigenstate of $H$ with energy $E$, then $E \geq N$. Letting $\ket{\zeta}$ be as such, we may immediately assume that $E \leq 2N$ (otherwise the claim follows trivially). Supposing this, then $\ket{\zeta}$ is an eigenstate of $H_0$ with energy $E - N$. This is true because of the non-negativity of $H_0$ and $H_1$: we have that $\braket{\zeta | H_0 |\zeta} \leq \braket{\zeta | H |\zeta} \leq 2N$, implying by hypothesis that $H_1 \ket{\zeta} = N\ket{\zeta}$ so that $H_0\ket{\zeta} = (H - H_1)\ket{\zeta} = (E - N)\ket{\zeta}$. Again applying non-negativity of $H_0$ yields $E\geq N$, so $\ket{\psi}$ is indeed the ground state of $H$.

It remains to be shown that the gap is still $\geq 1/2$. First, notice what happens to the low-energy ($E \leq 2N$) and high-energy $(E > 2N)$ states of $H_0$ after adding $H_1$. By the same argument in the preceding paragraph, the low-energy states remain eigenstates of $H$ with energy shifted up by $N$. The high-energy states may become mixed, but their energy always remains $> 2N$ by non-negativity of $H_1$.

The first case is when the first excited state of $H_0$ has energy $\leq N$, or $\Delta_0 \leq N$. Then it is an eigenstate of $H$ and the gap $\Delta$ between it and the ground state is unchanged from the energy shift of $N$. Finally, non-negativity of $H_1$ implies that all other states have energy larger than $N + \Delta \leq 2N$, including the other low-energy states of $H_0$ whose energy was shifted up by $N$, and any combinations of high-energy states of $H_0$, whose energy w.r.t $H$ remains $ > 2N$. Thus, $\Delta = \Delta_0$ in this case, so $\Delta \geq 1/2$.

The ``easy'' case is when $\Delta_0 > N$. Then all low-energy excited states of $H_0$ have energy $> 2N$ w.r.t $H$, as do all high-energy states of $H_0$. Thus the gap is $\Delta > N \geq 1/2$, as desired.
\end{proof}

\begin{proof}{Proof of 2.} Consider the shifted Hamiltonian $H' = \sum_j (1 - 2\Pi_j) + \sum_k (1 - 2 T_k)$, which has the same spectrum as $H$ but shifted by $-N^2 - N$ and scaled by 2. Each term of this Hamiltonian is unitary, so $H'$ may be block-encoded with $\alpha = N^2 + N = O(N^2)$. Thus, the normalized gap is $\delta = 2\Delta/\alpha \geq 2/\alpha = O(1/N^2)$ and the normalized ground state energy is $\mu = \frac{-N^2}{N^2 + N} = \tfrac{-1}{1 + 1/N} = O_{-1}(1/N) = O_{-1}(\delta^{1/2})$, as desired. 
\end{proof}

Second, consider a frustration-free $H_0$ defined on a 2D lattice of size $N \times N$ such as $H_0 = \sum_{j = 1}^{N^2} \alpha_j U_j$, where the $U_j$ may for instance be 2-body local interactions, and suppose for simplicity that $|\alpha_j| = 1$ for all $j$. Without loss of generality, we may assume that $-1$ is an eigenvalue of each $U_j$ by absorbing any phases into the coefficients $\alpha_j$, so that the ground state is $\mu = -N^2$. Say that we add frustration-free 1-D defect of $N$ perturbations $H_1 = \sum_{k=1}^N V_k = \sum_{k=1}^N  v U'_k$, where $v$ is the magnitude of each perturbation. We also require the special property that the total Hamiltonian $H = H_0 + H_1$ has spectral gap lower-bounded by a constant $\Delta$.

Under these circumstances, we can deduce that $H$ is nearly-frustration-free.
When $v$ is held fixed, we have $\alpha = N^2 + vN = O(N^2)$, and thus $\mu_0 \leq -N^2 + N v$ or $\mu \leq -1 + \frac{v}{N} = O_{-1}(1/N)$. We further have that $\delta \geq \frac{\Delta}{N^2} = \Omega(1/N^2)$ since the spectral gap is bounded from below. This is exactly the situation described in Proposition 1 with $y = 1/2$. The algorithm of that Proposition can then prepare the ground state of $H_0 + \sum_{k=1}^N V_k$ with a query complexity dependence on $N$ of
\begin{align}
    \widetilde O\left(\delta^{y/2 - 1}\right) = \widetilde O(\delta^{-3/4}) = \widetilde O(N^{3/2})
\end{align}
which is better than $\widetilde O(N^2)$ obtained without considering that $H$ is nearly frustration free. 
This argument generalizes to $d$-dimensional systems where frustrating perturbations are added sub-volumetrically.

\acknowledgements
 We acknowledge support from the Department of Energy grant DOE DESC0020165. M.T. thanks Samuel Goldman for fruitful discussions.

\bibliographystyle{unsrtnat}
\bibliography{main}

\begin{thebibliography}{20}
\providecommand{\natexlab}[1]{#1}
\providecommand{\url}[1]{\texttt{#1}}
\expandafter\ifx\csname urlstyle\endcsname\relax
  \providecommand{\doi}[1]{doi: #1}\else
  \providecommand{\doi}{doi: \begingroup \urlstyle{rm}\Url}\fi

\bibitem[Lin and Tong(2020{\natexlab{a}})]{Lin2020}
Lin Lin and Yu~Tong.
\newblock {Near-optimal ground state preparation}.
\newblock \emph{Quantum}, 4:\penalty0 1--22, 2020{\natexlab{a}}.
\newblock \doi{10.22331/Q-2020-12-14-372}.

\bibitem[Kitaev(1995)]{Kitaev1995}
A.~Yu. Kitaev.
\newblock {Quantum measurements and the Abelian Stabilizer Problem}.
\newblock pages 1--22, 1995.
\newblock \doi{10.48550/arXiv.quant-ph/9511026}.

\bibitem[Brassard et~al.(2002)Brassard, Høyer, Mosca, and Tapp]{Brassard2002}
Gilles Brassard, Peter Høyer, Michele Mosca, and Alain Tapp.
\newblock Quantum amplitude amplification and estimation.
\newblock \emph{Quantum Computation and Information}, page 53–74, 2002.
\newblock \doi{10.48550/arXiv.quant-ph/0005055}.

\bibitem[Gily{\'{e}}n et~al.(2019)Gily{\'{e}}n, Su, Low, and Wiebe]{Gilyen2019}
Andr{\'{a}}s Gily{\'{e}}n, Yuan Su, Guang~Hao Low, and Nathan Wiebe.
\newblock {Quantum singular value transformation and beyond: Exponential
  improvements for quantum matrix arithmetics}.
\newblock \emph{Proceedings of the Annual ACM Symposium on Theory of
  Computing}, pages 193--204, 2019.
\newblock \doi{10.1145/3313276.3316366}.

\bibitem[Ge et~al.(2019)Ge, Tura, and Cirac]{Ge2019}
Yimin Ge, Jordi Tura, and J.~Ignacio Cirac.
\newblock {Faster ground state preparation and high-precision ground energy
  estimation with fewer qubits}.
\newblock \emph{Journal of Mathematical Physics}, 60\penalty0 (2):\penalty0
  1--25, 2019.
\newblock \doi{10.1063/1.5027484}.

\bibitem[Grover(1996)]{Grover96}
Lov~K. Grover.
\newblock A fast quantum mechanical algorithm for database search.
\newblock In \emph{Proceedings of the Twenty-Eighth Annual ACM Symposium on
  Theory of Computing}, STOC '96, page 212–219, New York, NY, USA, 1996.
  Association for Computing Machinery.
\newblock \doi{10.1145/237814.237866}.

\bibitem[Low and Chuang(2019)]{Low2019}
Guang~Hao Low and Isaac~L. Chuang.
\newblock {Hamiltonian Simulation by Qubitization}.
\newblock \emph{Quantum}, 3:\penalty0 163, 2019.
\newblock \doi{10.22331/q-2019-07-12-163}.

\bibitem[Bravyi et~al.(2008)Bravyi, Divincenzo, Oliveira, and
  Terhal]{Bravyi2008}
Sergey Bravyi, David~P. Divincenzo, Roberto Oliveira, and Barbara~M. Terhal.
\newblock {The complexity of stoquastic local Hamiltonian problems}.
\newblock \emph{Quantum Information and Computation}, 8\penalty0 (5):\penalty0
  0361--0385, 2008.
\newblock \doi{10.26421/qic8.5-1}.

\bibitem[Somma and Boixo(2013)]{Somma2013}
R.~D. Somma and S.~Boixo.
\newblock {Spectral gap amplification}.
\newblock \emph{SIAM Journal on Computing}, 42\penalty0 (2):\penalty0 593--610,
  2013.
\newblock \doi{10.1137/120871997}.

\bibitem[Low and Chuang(2017{\natexlab{a}})]{Low2017}
Guang~Hao Low and Isaac~L. Chuang.
\newblock {Hamiltonian simulation by uniform spectral amplification}.
\newblock \emph{arXiv}, pages 1--32, 2017{\natexlab{a}}.
\newblock \doi{10.48550/arXiv.1707.05391}.

\bibitem[Chakraborty et~al.(2019)Chakraborty, Gily{\'{e}}n, and
  Jeffery]{Chakraborty2019}
Shantanav Chakraborty, Andr{\'{a}}s Gily{\'{e}}n, and Stacey Jeffery.
\newblock {The power of block-encoded matrix powers: Improved regression
  techniques via faster Hamiltonian simulation}.
\newblock In \emph{Leibniz International Proceedings in Informatics, LIPIcs},
  volume 132, pages 1--58, 2019.
\newblock \doi{10.4230/LIPIcs.ICALP.2019.33}.

\bibitem[Haah(2019)]{Haah2019}
Jeongwan Haah.
\newblock {Product decomposition of periodic functions in quantum signal
  processing}.
\newblock \emph{Quantum}, 3\penalty0 (September):\penalty0 1--22, 2019.
\newblock \doi{10.22331/q-2019-10-07-190}.

\bibitem[Low and Chuang(2017{\natexlab{b}})]{Low2017PRL}
Guang~Hao Low and Isaac~L. Chuang.
\newblock {Optimal Hamiltonian Simulation by Quantum Signal Processing}.
\newblock \emph{Physical Review Letters}, 118\penalty0 (1):\penalty0 1--6,
  2017{\natexlab{b}}.
\newblock \doi{10.1103/PhysRevLett.118.010501}.

\bibitem[Dong et~al.(2021)Dong, Meng, Whaley, and Lin]{Dong2021}
Yulong Dong, Xiang Meng, K.~Birgitta Whaley, and Lin Lin.
\newblock {Efficient phase-factor evaluation in quantum signal processing}.
\newblock \emph{Physical Review A}, 103\penalty0 (4):\penalty0 1--26, 2021.
\newblock \doi{10.1103/PhysRevA.103.042419}.

\bibitem[Lin and Tong(2020{\natexlab{b}})]{Lin2020Filtering}
Lin Lin and Yu~Tong.
\newblock {Optimal polynomial based quantum eigenstate filtering with
  application to solving quantum linear systems}.
\newblock \emph{Quantum}, 4:\penalty0 1--35, 2020{\natexlab{b}}.
\newblock \doi{10.22331/Q-2020-11-11-361}.

\bibitem[Georges et~al.(1996)Georges, Kotliar, Krauth, and
  Rozenberg]{RevModPhys.68.13}
Antoine Georges, Gabriel Kotliar, Werner Krauth, and Marcelo~J. Rozenberg.
\newblock Dynamical mean-field theory of strongly correlated fermion systems
  and the limit of infinite dimensions.
\newblock \emph{Rev. Mod. Phys.}, 68:\penalty0 13--125, Jan 1996.
\newblock \doi{10.1103/RevModPhys.68.13}.

\bibitem[Pi{\v{z}}orn and Verstraete(2012)]{Pizorn2012}
Iztok Pi{\v{z}}orn and Frank Verstraete.
\newblock {Variational numerical renormalization group: Bridging the gap
  between NRG and density matrix renormalization group}.
\newblock \emph{Physical Review Letters}, 108\penalty0 (6):\penalty0 1--8,
  2012.
\newblock \doi{10.1103/PhysRevLett.108.067202}.

\bibitem[Hellman et~al.(2017)Hellman, Hoffmann, Tserkovnyak, Beach, Fullerton,
  Leighton, MacDonald, Ralph, Arena, D\"urr, Fischer, Grollier, Heremans,
  Jungwirth, Kimel, Koopmans, Krivorotov, May, Petford-Long, Rondinelli,
  Samarth, Schuller, Slavin, Stiles, Tchernyshyov, Thiaville, and
  Zink]{RevModPhys.89.025006}
Frances Hellman, Axel Hoffmann, Yaroslav Tserkovnyak, Geoffrey S.~D. Beach,
  Eric~E. Fullerton, Chris Leighton, Allan~H. MacDonald, Daniel~C. Ralph,
  Dario~A. Arena, Hermann~A. D\"urr, Peter Fischer, Julie Grollier, Joseph~P.
  Heremans, Tomas Jungwirth, Alexey~V. Kimel, Bert Koopmans, Ilya~N.
  Krivorotov, Steven~J. May, Amanda~K. Petford-Long, James~M. Rondinelli, Nitin
  Samarth, Ivan~K. Schuller, Andrei~N. Slavin, Mark~D. Stiles, Oleg
  Tchernyshyov, Andr\'e Thiaville, and Barry~L. Zink.
\newblock Interface-induced phenomena in magnetism.
\newblock \emph{Rev. Mod. Phys.}, 89:\penalty0 025006, Jun 2017.
\newblock \doi{10.1103/RevModPhys.89.025006}.

\bibitem[Stoudenmire and White(2012)]{Stoudenmire2012}
E.~M. Stoudenmire and Steven~R. White.
\newblock {Studying two-dimensional systems with the density matrix
  renormalization group}.
\newblock \emph{Annual Review of Condensed Matter Physics}, 3\penalty0
  (1):\penalty0 111--128, 2012.
\newblock \doi{10.1146/annurev-conmatphys-020911-125018}.

\bibitem[Assaad and Herbut(2013)]{PhysRevX.3.031010}
Fakher~F. Assaad and Igor~F. Herbut.
\newblock Pinning the order: The nature of quantum criticality in the hubbard
  model on honeycomb lattice.
\newblock \emph{Phys. Rev. X}, 3:\penalty0 031010, Aug 2013.
\newblock \doi{10.1103/PhysRevX.3.031010}.

\end{thebibliography}
\end{document}